\newtheorem{observation}{\bf Observation}
\newtheorem{proposition}{\bf Proposition}
\newcommand{\mcds}{minimal connected dominating set}
\newcommand{\mcdss}{minimal connected dominating sets}
\newcommand*{\dprime}{^{\prime\prime}\mkern-1.2mu}
\newlength{\mg}
\newlength{\mgd}
\title{On the maximum number of minimal connected dominating sets in convex bipartite graphs \footnote{%
This work is supported by the French National Research Agency  
by the ANR project GraphEn (ANR-15-CE40-0009).}}
\titlerunning{On the maximum number of minimal connected dominating sets in convex bipartite graphs}
\author{Mohamed Yosri Sayadi}{Universit\'e de Lorraine, LGIPM, F-57000 Metz, France}{yosri.sayadi@univ-lorraine.fr}{}{}
\authorrunning{M. Y. Sayadi }
\subjclass{G.2.2, F.2.2}
\keywords{Minimal connected dominating set, exact algorithms, enumeration, graphs}
\begin{document}
\maketitle

\begin{abstract}
The enumeration of minimal connected dominating sets is known to be notoriously hard for general graphs. Currently, it is only known that the sets can be enumerated slightly faster than $\mathcal{O}^{*}(2^n)$ and the algorithm is highly nontrivial. 
Moreover, it seems that it is hard to use bipartiteness as a structural aide
when constructing enumeration algorithms. Hence, to the best of our knowledge, there is no known input-sensitive algorithm for enumerating minimal dominating sets, or one of their related sets, in bipartite graphs better than that of general graphs.
In this paper, we provide the first input-sensitive enumeration algorithm for some non trivial subclass of bipartite graphs, namely the convex graphs. 
We present an algorithm to enumerate all minimal connected dominating sets of convex bipartite graphs in
time $\mathcal{O}(1.7254^{n})$ where $n$ is the number of vertices of the input graph.
Our algorithm implies a corresponding upper bound for the number of minimal connected dominating sets for this graph class.
We complement the result by providing a convex bipartite graph, which have at least $3^{(n-2)/3}$ \mcdss{}.
\end{abstract}

\section{Introduction} \label{sec:intro}
Listing, generating or enumerating objects of specified type and properties has important
applications in various domains of computer science, such as data mining, machine learning, and artificial
intelligence, as well as in other sciences, especially biology.
In particular, enumeration algorithms whose running time is measured in the size of the input
have gained increasing interest recently \cite{Golovach2018, GolovachKLS17, GolovachKLS19}.

In fact, several classical examples exist in this direction, of which one of the most famous is perhaps
that of Moon and Moser \cite{Moon1965} who showed that the maximum number of maximal independent sets
in a graph on $n$ vertices is $\Theta (3^{n/3})$.
Recently Lokshtanov et al. \cite{Lokshtanovmcds} studied the maximum number of minimal connected dominating
sets in an arbitrary $n$-vertex graph and they showed that the maximum
number of minimal connected dominating sets is $\mathcal{O} (     2^{  (1-\epsilon )n   }    )$ where $\epsilon > 10^{-50}$.
Contrary to the number of maximal independent sets where the upper and lower bounds are tight, 
the best known lower bound for the number of minimal connected dominating sets is $3^{(n-2)/3}$ \cite{ch+mcds},
meaning there is a huge gap up to now between the lower and upper bounds.
It is worth noting that computing a minimum connected dominating set is one of the classical NP-hard problems as already
mentioned in the monograph of Garey and Johnson \cite{Garey1979}. Furthermore, it also is NP-hard for bipartite graphs \cite{bipartite}
and chordal bipartite graphs \cite{Haiko}.
The best known running time of an algorithm solving this problem for general graphs is $O(1.8619^{n} )$ \cite{Abu-Khzam:2011:EAC:1994013.1994228}. 
However, the minimum connected dominating set problem is tractable for convex bipartite graphs and it can be solved in
$\mathcal{O} ( n^{3})$ time \cite{DamaschkeMK90}. 
Furthermore, the problems of finding a minimum dominating set and a minimum
independent dominating set in an $n$-vertex convex bipartite graph are solvable in time
$\mathcal{O} ( n^{2})$ \cite{jensen}.

Despite the structural and algorithmic properties of bipartite graphs, there is no enumeration algorithm for the \mcdss{} in a
bipartite graph better than that for general graphs. The situation is similar when considering the enumeration of
minimal dominating sets and maximal irredundant sets in bipartite graphs.
In this paper we study the enumeration and maximum number of minimal connected dominating
sets in convex bipartite graphs, and we prove that the number of minimal connected dominating sets in a convex bipartite graph
is $ \mathcal{O}(1.7254^{n})$ and that those sets can be enumerated in time
$ \mathcal{O}(1.7254^{n})$. 

The studied graph classes for the enumeration of minimal connected dominating sets are summarized in
the following table, where $n$ is the number of vertices and $m$ is the number of edges of an input graph
belonging to the given class.

 \begin{table}[h!]
  \begin{center}
    \caption{Lower and upper bounds on the maximum number of minimal connected dominating sets.}
    \label{tab:table1}
    \begin{tabular}{l l l l l} 
      \textbf{Graph class} & \textbf{Lower bound} & \textbf{Ref.}& \textbf{Upper bound} & \textbf{Ref.}\\
    \hline
      general 			& $3^{(n-2)/3}$ 	& \cite{ch+mcds}& $\mathcal{O} (     2^{  (1-\epsilon )n   }    )$ 		& \cite{Lokshtanovmcds}\\
      chordal 			& $3^{(n-2)/3}$ 	& \cite{ch+mcds}& $\mathcal{O} (  1.7159^{n}) $ 					& \cite{ch+mcds} \\
      split 				& $1.3195^{n}$ 	& \cite{CouturierHHK13}& $1.3674^{n}$ 								& \cite{sp+mcds}\\
      co-bipartite 		& $1.3195^{n}$ 	& \cite{CouturierHHK13}& $n^{2}+2\times 1.3674^{n}$ 					& \cite{sp+mcds}\\
      interval 			& $3^{(n-2)/3}$ 	&\cite{ch+mcds}& $3^{(n-2)/3}$ 								& \cite{ch+mcds}\\
      AT-free 			& $3^{(n-2)/3}$ 	&\cite{ch+mcds}& $\mathcal{O}^{*} ( 3^{(n-2)/3} ) $ 				& \cite{ch+mcds}\\
      strongly chordal	& $3^{(n-2)/3}$ 	&\cite{ch+mcds}& $ 3^{n/3} $ 								& \cite{ch+mcds}\\
      distance-hereditary	& $3^{(n-2)/3}$ 	&\cite{ch+mcds}& $ 3^{n/3}\times n $ 							& \cite{ch+mcds}\\
      cograph			& $m$		 	& \cite{ch+mcds}& $ m$										&\cite{ch+mcds}\\
      convex bipartite		& $3^{(n-2)/3}$     & [this paper] & $\mathcal{O} ( 1.7254^{n} )$ 								& [this paper]\\

    \end{tabular}
  \end{center}
\end{table}

\section{Preliminaries}\label{sec:defs}

We consider finite undirected convex bipartite graphs $G=(U,W,E)$ without loops or multiple edges. 
Let $ V = U \cup W $ be the vertex set of $G$, where $U$ and $W$ define the bipartition of vertices.
We also let $n = |V (G)|$ and $m = |E(G)|$ denote the number of vertices and edges, respectively, of the input graph $G$. 
A bipartite graph $G = (U, W, E)$ is convex if there exists an ordering of the vertices of $W$
such that for each $u \in U$, the neighbors of $u$ are consecutive in $W$.
 For convenience, we consider that $U = \{1, 2, \ldots, |U |\}$ and $W = \{w_1, w_2, \ldots, w_{|W |}\}$, and that the vertices in $W$
are given according to the ordering mentioned above. 
We say that a vertex $w_{i} \in W$ is smaller (larger) than a vertex $w_{j} \in W$ if the integer $i$ is
smaller (larger) than the integer $j$.
By the definition of convex bipartite graphs, the neighbors of a vertex $u \in U$ can be represented
as an interval $I_{u} = [l(I_u), r(I_u)]$ , called the neighbor interval of $u$,
where $l(I_u)$  and $r(I_u)$ are the smallest and largest vertices, respectively, in the interval
of vertices of $W$ adjacent to $u$. Further, we call $l(I_u)$ and $r(I_u)$ the left endpoint and right endpoint of the interval $I_u$, respectively.
Then, the neighbors of vertices of $U$ can be represented by a set $I (U )$
of intervals. We call $I (U )$ the interval representation of neighbors of vertices in $U$.
A star graph is the complete bipartite graph $K_{1,n-1}$: a tree with one internal node and $n-1$ leaves.

Let $v$ be a vertex of $V$ and let $D$ be a subset of $V$. The open neighborhood $N(v)$ 
of the vertex $v$ consists of the set of vertices adjacent to $v$, that is, 
$N(v)=\{w \in V |vw \in E\}$, and the closed neighborhood of $v$ is $N [v] = N (v) \cup \{v\}$. 
The open neighborhood $N(D)$ is defined to be $\cup_{v\in D} N(v)$, and the closed neighborhood of $D$
is $N [D] = N(D) \cup D$. We denote by $deg(v)=|N(v)| $ the degree of $v$.
The set $D$ dominates vertex $v$ if either $v \in D$ or $N (v) \cap D \ne \emptyset$.
If $D$ dominates all vertices in a subset $S$ of $V$, then we say that $D$ dominates $S$. The
set $D$ is called a dominating set of $G$ if and only if $D$ dominates $V$.
A vertex $u\in V$ is a private neighbor of the vertex $v$ (with respect to $D$) if $u$ is dominated by $v$
but is not dominated by $D \setminus \{v\}$.
We denote by $G[D]$ the subgraph of $G$ induced by $D$. 
A set $D \subseteq V(G)$ is connected if $G[D]$ is a connected
graph. 
A set of vertices $D$ is a connected dominating set if $D$ is connected and $D$ dominates $V (G)$.
A (connected) dominating set is minimal if no proper subset of it is a (connected) dominating set.
Further, a connected dominating set $D$ is minimal if and only if for any $v \in D$, $v$ has a private neighbor or $D \setminus \{v\}$ is
disconnected, i.e., $v$ is a cut vertex of $G[D]$.
Notice that every cut vertex of $G$ belongs to every \mcds{} of $G$.

It is possible to recognize convex bipartite graphs in linear time\cite{Brandstadt:1999}.
We use the $O^*$ notation that hides polynomials, i.e., we write $f(n)=O^*(g(n))$ if $f(n)=O(p(n) \cdot g(n))$ where $p$ is a polynomial in $n$.

\section{Structural Properties}

In this section, we will provide some useful properties which any \mcds{} $D$ of a convex bipartite graph $G=(U,W,E)$ satisfies.
Recall that the vertices $W$ of the graph $G=(U,W,E)$ are given with an ordering $w_1, w_2, \ldots, w_{|W |}$ satisfying
that for all $u \in U$, $N(u) \subset W$ is an interval $I_u=[l(u),r(u)]$.

\begin{observation}\label{lemStar}
A set $D\subseteq V$ is a \mcds{} of a $K_{1,n-1}$ if and only if $D$ is a singleton consisting of the internal node of $K_{1,n-1}$.
Hence, if $G$ is a star graph then 
the problem can be solved in linear time.
\end{observation}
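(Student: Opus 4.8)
The plan is to verify both directions directly from the rigid structure of a star, so essentially no computation is needed. Write $c$ for the internal node of $K_{1,n-1}$ (the unique vertex of degree $n-1$ when $n \ge 3$) and let $L$ be the set of $n-1$ leaves; the key facts I would use are that $L$ is an independent set, that $N(\ell) = \{c\}$ for every $\ell \in L$, and that $N[c] = V$.

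For the ``if'' direction I would check the three defining properties of a \mcds{} for the candidate set $\{c\}$: $G[\{c\}]$ is a single vertex and hence connected; $N[c] = V$, so $\{c\}$ dominates $G$; and minimality is automatic, since the only proper subset of a singleton is the empty set, which dominates nothing (for $n \ge 2$).

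For the ``only if'' direction, let $D$ be an arbitrary \mcds{}. First I would show $c \in D$: if not, then $D \subseteq L$, and since distinct leaves are non-adjacent, connectedness of $G[D]$ forces $|D| \le 1$; but a lone leaf fails to dominate the other leaves as soon as $n \ge 3$, a contradiction. Once $c \in D$ is established, observe that $\{c\}$ is by itself a connected dominating set contained in $D$, so minimality of $D$ forbids $D$ from properly containing $\{c\}$, whence $D = \{c\}$. The closing sentence of the observation is then immediate: recognizing whether $G$ is a star and, if so, returning its centre takes linear time.

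There is no real obstacle here; the only care needed is bookkeeping for the degenerate small cases ($n \le 2$, where the notion of ``internal node'' collapses or the statement is trivial) and making explicit the step that independence of the leaves caps $|D \cap L|$ at $1$ when $c \notin D$. Everything else follows directly from the definitions of domination, connectivity, and minimality recalled in Section~\ref{sec:defs}.
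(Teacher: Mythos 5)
Your argument is correct: both directions follow exactly as you describe, and the observation that $\{c\}$ is itself a connected dominating set, so minimality collapses any $D$ containing $c$ to $\{c\}$, is the right key step. The paper states this observation without any proof, treating it as immediate, so your write-up simply supplies the routine verification (including the sensible caveat about the degenerate cases $n\le 2$, where the notion of internal node is not well defined); there is nothing to compare against and no gap to report.
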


From now on, we consider a convex bipartite graph $G=(U,W,E)$ 
satisfying $|W| \ge 2$ and $|U| \ge 2$ which implies $|D| \ge 2$
for each \mcds{} of $G$. 

\begin{observation}\label{lemN}
 Each vertex $v \in V$ has a neighbor $x\in D$.
\end{observation}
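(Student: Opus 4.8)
The plan is to prove the statement by a short case analysis on whether $v$ lies in $D$, relying on just two facts recorded in the preliminaries: that $D$ is a dominating set of $G$, and that $G[D]$ is connected with $|D|\ge 2$.

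First I would treat the case $v\notin D$. Since $D$ is a dominating set and $v$ is not in $D$, the definition of domination yields $N(v)\cap D\neq\emptyset$, so any vertex in $N(v)\cap D$ serves as the desired $x\in D$ adjacent to $v$.

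Next I would handle the case $v\in D$. The point here is that the standing assumptions $|U|\ge 2$ and $|W|\ge 2$ force $|D|\ge 2$: in a bipartite graph a single vertex of $U$ cannot dominate any other vertex of $U$, and symmetrically for $W$, so no \mcds{} of $G$ can be a singleton. Since $D$ is connected and $|D|\ge 2$, the induced subgraph $G[D]$ is connected on at least two vertices and hence has no isolated vertex; in particular $v$ has a neighbor lying in $D$. Combining the two cases proves the observation.

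There is essentially no obstacle: the statement is an immediate consequence of the definitions together with $|D|\ge 2$. The only subtlety worth stating explicitly is that for $v\in D$ one must invoke the connectivity of $G[D]$ rather than merely the fact that $D$ dominates $V$, since a vertex of a dominating set need not be dominated by the remaining vertices of that set.
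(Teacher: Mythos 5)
Your proof is correct and follows essentially the same two-case argument as the paper: domination handles $v\notin D$, and connectivity of $G[D]$ together with $|D|\ge 2$ handles $v\in D$. Your extra remark justifying $|D|\ge 2$ from $|U|,|W|\ge 2$ via bipartiteness is a fine elaboration of what the paper merely asserts before the observation.
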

\begin{proof}

 If $v \not\in D$, then there exists an $x \in D \cap N(v)$ because $D$ is a dominating set.
 Otherwise $v \in D$, and then since $D$ is a connected set and 
 $|D| \ge 2$, there exists an $x \in D \cap N(v)$.
\end{proof}

\begin{lemma}\label{lemIncD}
 If $i,j \in U$ such that $I_i \subseteq I_j$ then 
 $|\{i,j\} \cap D| \le 1$.
\end{lemma}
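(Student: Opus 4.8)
The plan is to argue by contradiction: assume $i,j\in D$ and derive a contradiction with the minimality of $D$. First note two immediate facts: $i$ and $j$ are non-adjacent (both lie in $U$ and $G$ is bipartite), and the hypothesis $I_i\subseteq I_j$ is precisely the statement $N_G(i)\subseteq N_G(j)$. The first step is then to check that, under these hypotheses, $i$ has no private neighbor with respect to $D$. Indeed, a private neighbor $u$ of $i$ must be dominated by $i$, so either $u\in N_G(i)$, in which case $u\in N_G(j)$ and hence $u$ is dominated by $j\in D\setminus\{i\}$; or $u=i$, but then by Observation~\ref{lemN} the vertex $i$ has a neighbor $x\in D$, and $x\in W$ forces $x\in D\setminus\{i\}$, so $i$ is dominated by $D\setminus\{i\}$. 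In either case there is no private neighbor of $i$.

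By the characterization of minimality recalled in Section~\ref{sec:defs}, a vertex of $D$ having no private neighbor must be a cut vertex of $G[D]$; so the second step is to show that $i$ is in fact \emph{not} a cut vertex of $G[D]$, which yields the contradiction. For this I would use a rerouting argument. Since $D$ is connected and $D=\{i,j\}$ is impossible (its two vertices are non-adjacent), we have $|D|\ge 3$, and it suffices to connect any two $x,y\in D\setminus\{i\}$ inside $G[D\setminus\{i\}]$. Take a walk from $x$ to $y$ in $G[D]$; since $x,y\neq i$, every occurrence of $i$ along the walk is internal, hence entered and left through vertices of $N_G(i)\cap D\subseteq N_G(j)\cap D$. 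Replacing each such occurrence of $i$ by $j$ is legal, because $j\in D$ and both of its walk-neighbors lie in $N_G(j)$, and it produces a walk from $x$ to $y$ avoiding $i$. Hence $G[D\setminus\{i\}]$ is connected, contradicting that $i$ is a cut vertex, and therefore $|\{i,j\}\cap D|\le 1$.

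I expect the only point that needs care to be this rerouting step: one must work with walks rather than paths, so that repeatedly substituting the same vertex $j$ causes no problem, and one must use that $i$ is never an endpoint of the chosen walk, which is guaranteed by picking both endpoints in $D\setminus\{i\}$. Everything else is immediate from Observation~\ref{lemN} and from the private-neighbor / cut-vertex characterization of minimal connected dominating sets.
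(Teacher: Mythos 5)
Your proof is correct and takes essentially the same route as the paper's: assume $i,j\in D$, show $i$ has no private neighbor (its neighbors in $W$ are dominated by $j$, and $i$ itself is dominated by $D\setminus\{i\}$ via Observation~\ref{lemN}), and show $i$ is not a cut vertex of $G[D]$ by rerouting any connection through $j$. Your careful use of walks instead of paths in the rerouting step is only a minor tightening of the paper's argument, not a different approach.
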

\begin{proof}
  
Let $i,j\in U: I_i \subseteq I_j$ and suppose, for the sake of contradiction, that both $ i,j \in D$.
The vertex $i$ cannot have a private neighbor with respect to $D$ in $W$ because all its neighbors are dominated by $j$.
Furthermore, $i$ cannot be a private for itself because $D$ is connected and $|D| \ge 2$.
Therefore $i$ cannot have a private neighbor.
On the other hand, to show that $i$ is not a cut vertex of $G[D]$ let $x,y \in V \setminus \{i\}$. For every $x-y$ path in $G$ which passes through $i$, there exists
a $x-y$ path passing through $j$
without passing through $i$. Therefore $i$ cannot be a cut vertex of $G[D]$.
Hence $D$ is not a \mcds{}; a contradiction.
\end{proof}

This immediately implies 

\begin{lemma}\label{lemIncR}
 If $r(I_i) = r(I_j)$ or $ l(I_i) = l(I_j)$, $ i,j \in U $, then $|\{i,j\} \cap D| \le 1$. 
\end{lemma}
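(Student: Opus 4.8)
The plan is to deduce Lemma~\ref{lemIncR} directly from Lemma~\ref{lemIncD} by observing that each of the two hypotheses forces an inclusion between the neighbor intervals $I_i$ and $I_j$. Recall that for a vertex $u \in U$ the set $N(u)$ is an interval $I_u = [l(I_u), r(I_u)]$ of consecutive vertices of $W$, so an interval is completely determined by its left and right endpoints, and $I_a \subseteq I_b$ holds precisely when $l(I_b) \le l(I_a)$ and $r(I_a) \le r(I_b)$.

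First I would treat the case $r(I_i) = r(I_j)$. Without loss of generality assume $l(I_i) \le l(I_j)$ (otherwise swap the roles of $i$ and $j$). Then $l(I_i) \le l(I_j)$ and $r(I_j) = r(I_i) \le r(I_i)$, so $I_j \subseteq I_i$. By Lemma~\ref{lemIncD} applied to this inclusion, $|\{i,j\} \cap D| \le 1$. Symmetrically, if instead $l(I_i) = l(I_j)$, I would assume without loss of generality $r(I_i) \le r(I_j)$; then $l(I_j) = l(I_i) \le l(I_i)$ and $r(I_i) \le r(I_j)$, giving $I_i \subseteq I_j$, and Lemma~\ref{lemIncD} again yields $|\{i,j\} \cap D| \le 1$. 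In either case the conclusion holds, which completes the argument.

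There is essentially no obstacle here: the statement is, as the text says, an immediate corollary, and the only thing to be careful about is the symmetry/without-loss-of-generality step, namely that when two intervals share one endpoint, one of them is contained in the other depending on the comparison of the remaining pair of endpoints. One could also note the degenerate possibility $I_i = I_j$ (when \emph{both} endpoints coincide), which is covered since $I_i = I_j$ is in particular an inclusion in either direction, so Lemma~\ref{lemIncD} still applies. Hence the whole proof is a two-line case distinction reducing to Lemma~\ref{lemIncD}.
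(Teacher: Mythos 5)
Your proposal is correct and is exactly the intended argument: the paper states Lemma~\ref{lemIncR} as an immediate consequence of Lemma~\ref{lemIncD}, and your case distinction (a shared endpoint forces one neighbor interval to contain the other, depending on the comparison of the remaining endpoints) is precisely the deduction being invoked. Nothing further is needed.
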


The following two lemmata are crucial for our branching algorithm. 

\begin{lemma}\label{lemCov}
For every $i \le |W|-1$, there exists a vertex $u \in D \cap U$ such that $u \in N(w_i) \cap N(w_{i+1})$.
\end{lemma}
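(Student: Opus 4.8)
The plan is to argue by contradiction: suppose there is some index $i \le |W|-1$ such that no vertex of $D \cap U$ is adjacent to both $w_i$ and $w_{i+1}$. By Observation~\ref{lemN}, both $w_i$ and $w_{i+1}$ have neighbors in $D$, so let $A = N(w_i) \cap D \cap U$ and $B = N(w_{i+1}) \cap D \cap U$; both are nonempty, and by our assumption $A \cap B = \emptyset$. The key structural fact I would use is that, by convexity, for any $a \in A$ the interval $I_a$ contains $i$ but not $i+1$, hence $r(I_a) = i$; symmetrically, for any $b \in B$ the interval $I_b$ contains $i+1$ but not $i$, hence $l(I_b) = i+1$. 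In particular every interval of a vertex in $A$ lies entirely in $\{1,\dots,i\}$ and every interval of a vertex in $B$ lies entirely in $\{i+1,\dots,|W|\}$, so no vertex of $A$ is adjacent to any vertex of $B$ via a common neighbor in $W$ — more importantly, the "cut" between left and right is clean.

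The next step is to exploit connectedness of $G[D]$. Since $D$ is connected and contains vertices from both $A$ (whose $W$-neighbors are all $\le i$) and $B$ (whose $W$-neighbors are all $\ge i+1$), any path in $G[D]$ from a vertex of $A$ to a vertex of $B$ must pass through $W$-vertices, alternating between $U$ and $W$. I would track such a shortest path and show it must use some $w_k \in D$ with $k \le i$ adjacent to a $U$-vertex whose interval reaches across, or else derive that $D \cap W$ separates into a left part and a right part with no $U$-vertex of $D$ bridging position $i/i{+}1$ — which contradicts connectedness. Concretely: let $D_W^L = \{w_k \in D : k \le i\}$ and $D_W^R = \{w_k \in D : k \ge i+1\}$, and let $D_U^L$ (resp. $D_U^R$) be the $U$-vertices of $D$ whose intervals are contained in $\{1,\dots,i\}$ (resp. $\{i+1,\dots,|W|\}$). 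The assumption $A \cap B = \emptyset$ together with convexity forces every vertex of $D \cap U$ into exactly one of $D_U^L$, $D_U^R$ (a $U$-vertex whose interval meets both sides would contain both $i$ and $i+1$ and hence lie in $A \cap B$). Then $D_U^L \cup D_W^L$ and $D_U^R \cup D_W^R$ partition $D$ into two nonempty parts (nonempty because $A \subseteq D_U^L$ and $B \subseteq D_U^R$) with no edges between them, contradicting that $G[D]$ is connected.

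I expect the main obstacle to be handling the $W$-vertices of $D$ carefully: a priori a vertex $w_k \in D$ with $k \le i$ might be adjacent (in $G$, not in $G[D]$) to a $U$-vertex of $D_U^R$, but such adjacency would require that $U$-vertex's interval to contain $k \le i$, forcing it into $A$ after all. So one must check that the partition above really has no internal edges, using only convexity and the defining property $A \cap B = \emptyset$; this is a short but slightly fiddly case analysis on which side of $i$ a given $U$-interval can reach. Once that is nailed down, the contradiction with connectedness of $G[D]$ is immediate, and the lemma follows.
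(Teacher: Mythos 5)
Your proposal is correct and follows essentially the same route as the paper: assume no vertex of $D\cap U$ covers both $w_i$ and $w_{i+1}$, use Observation~\ref{lemN} to get $D$-neighbors of $w_i$ and $w_{i+1}$ on each side, and then split everything into a ``left'' part (indices $\le i$, intervals ending at $i$) and a ``right'' part (indices $\ge i+1$, intervals starting at $i+1$) with no crossing edges, contradicting the connectedness of $G[D]$. The only cosmetic difference is that the paper phrases the cut inside the induced subgraph $G[W\cup(U\cap D)]$ while you partition $D$ directly, which changes nothing of substance.
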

\begin{proof}
Suppose, for the sake of contradiction, there exists an $i \le |W|-1$ such that $D \cap N(w_i) \cap N(w_{i+1}) = \emptyset $.
Let us consider the induced subgraph  $H=G[W\cup (U\cap D)]$.  We claim that $H'=(U',W',E')$ and $H\dprime=(U\dprime,W\dprime,E\dprime)$
such that $U'= \{u'\in U :r(I_{u'}) \le w_i \} $,
$W'=\{w_{i'} \in W : w_{i'} \le w_{i}\} $,
$U\dprime= \{u\dprime \in U:l(I_{u\dprime}) \ge w_{i+1} \} $ and
$W\dprime=\{w_{i\dprime} \in W : w_{i\dprime} \ge w_{i}\} $, are two connected components of  $H=(U,W,E)$.
To see this note that there is no $w_ju \in E$ such that $w_j \in W', u \in U\dprime$ or $w_j \in W\dprime, u \in U'$, otherwise $ u \in N(w_i) \cap N(w_{i+1})$.
Hence the induced subgraphs $H'$ and $H\dprime$ are indeed two disjoint components of $H$.
By observation~\ref{lemN}, $ N(w_i) \cap  D \ne \emptyset $ and $ N(w_{i+1}) \cap D \ne \emptyset $, thus $D \cap U' \ne \emptyset$ and
$ D \cap U\dprime \ne \emptyset$. Hence $G[D]$ is disconnected, a contradiction.
\end{proof}

Similarly, we may obtain the following lemma.

\begin{lemma}\label{lemCovD}
For any two consecutive $w_i, w_j \in W \cap D $ in $G[D]$ (not necessarily consecutive in $W$), there exists a vertex $u \in D \cap U$
such that $u \in N(w_i) \cap N(w_{j})$.
\end{lemma}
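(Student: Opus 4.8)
The plan is to argue by contradiction, in the spirit of the proof of Lemma~\ref{lemCov}: assuming the conclusion fails, I will split $D$ into two nonempty parts with no edge of $G[D]$ between them, contradicting the connectedness of $G[D]$. So let $w_i, w_j \in W \cap D$ be consecutive in $G[D]$, say with $i < j$, meaning that no $w_k \in W \cap D$ has $i < k < j$; and suppose for contradiction that $D \cap N(w_i) \cap N(w_j) = \emptyset$.

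First I would partition the $W$-part of $D$ as $W_L = \{w_k \in W \cap D : k \le i\}$ and $W_R = \{w_k \in W \cap D : k \ge j\}$; the consecutiveness hypothesis is exactly what ensures $W_L \cup W_R = W \cap D$, and both parts are nonempty since $w_i \in W_L$ and $w_j \in W_R$. The crucial step is to show that no $u \in D \cap U$ is adjacent both to a vertex of $W_L$ and to a vertex of $W_R$: if $u$ were adjacent to $w_m \in W_L$ and to $w_{m'} \in W_R$, then $m \le i < j \le m'$, and since $I_u$ is an interval of $W$ containing $w_m$ and $w_{m'}$ it contains all of $[w_i, w_j]$, so $u \in D \cap N(w_i) \cap N(w_j)$, contrary to assumption. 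Now, by Observation~\ref{lemN} every $u \in D \cap U$ has a neighbor in $D$, which (being in $U$) lies in $W \cap D = W_L \cup W_R$; combined with the previous step, this lets me partition $D \cap U$ into $U_L$, the vertices with a neighbor in $W_L$ and none in $W_R$, and $U_R$, those with a neighbor in $W_R$ and none in $W_L$.

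It then remains to check that $W_L \cup U_L$ and $W_R \cup U_R$ partition $D$ with no edge of $G[D]$ crossing: every edge of $G[D]$ joins a vertex of $U \cap D$ to one of $W \cap D$, a vertex of $U_L$ has no neighbor in $W_R$, and a vertex of $U_R$ has no neighbor in $W_L$, so each such edge stays inside one part. Both parts are nonempty, so $G[D]$ is disconnected, contradicting that $D$ is a connected dominating set.

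The only slightly delicate point, I expect, is keeping the two hypotheses in their proper roles: it is the consecutiveness of $w_i, w_j$ in $W \cap D$ that stops a $U$-vertex from having a $D$-neighbor with index strictly between $i$ and $j$ (so that $W \cap D$ genuinely splits into $W_L$ and $W_R$), and it is the interval structure of the sets $I_u$ that turns "adjacent to both sides" into "adjacent to $w_i$ and to $w_j$". Once these are isolated, the rest is the same bipartite-separation bookkeeping as in the proof of Lemma~\ref{lemCov}.
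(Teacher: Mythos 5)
Your proof is correct and is essentially the argument the paper intends: the paper gives no separate proof of Lemma~\ref{lemCovD}, stating only that it follows ``similarly'' to Lemma~\ref{lemCov}, and your adaptation (splitting $W\cap D$ at the consecutive pair $w_i,w_j$, using the interval property to show no $u\in D\cap U$ can see both sides, and invoking Observation~\ref{lemN} to conclude $G[D]$ would be disconnected) is exactly that adaptation, merely phrased as a partition of $D$ itself rather than of the auxiliary graph $G[W\cup(U\cap D)]$ used in Lemma~\ref{lemCov}.
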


\begin{lemma}\label{obsForbid}
 
Let $i,j \in U$ such that $I_i \cap I_j \neq \emptyset$.
If there is a $k \in U$ such that $I_k \subset (I_i \cup I_j)$ and $i,j,k \in D \cap U$, then $I_i \cap I_j \cap D = \emptyset$.
 
\end{lemma}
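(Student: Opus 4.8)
The plan is to argue by contradiction: assume there is a vertex $w \in I_i \cap I_j \cap D$, and derive that $k$ violates the minimality of $D$, namely that $k$ has no private neighbour and is not a cut vertex of $G[D]$. Throughout one uses two trivial facts: since $G$ is bipartite with $i,j,k \in U$ and $w \in W$, we have $w \notin \{i,j,k\}$ and every neighbour of $k$ lies in $W$; and the hypothesis $I_k \subseteq I_i \cup I_j$ means every neighbour of $k$ is adjacent to $i$ or to $j$. (As the statement intends, $k$ is distinct from $i$ and $j$.)

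First I would show that $k$ has no private neighbour with respect to $D$. If $s$ were such a vertex, then since $G[D]$ is connected and $|D| \ge 2$ the vertex $k$ has a neighbour in $D$, so $s \neq k$; hence $s \in N(k) \subseteq W$, $s \notin D$, and $N(s) \cap D = \{k\}$. But $s \in N(k)$ gives $s \in I_k \subseteq I_i \cup I_j$, so $s \in I_i$ or $s \in I_j$, i.e.\ $i \in N(s)$ or $j \in N(s)$; as $i,j \in D \setminus \{k\}$ this contradicts $N(s) \cap D = \{k\}$. Consequently, by minimality of $D$, the vertex $k$ must be a cut vertex of $G[D]$.

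The second and main step contradicts this with the help of $w$. Every neighbour of $k$ in $G[D]$ belongs to $I_k \subseteq I_i \cup I_j$ and is therefore adjacent in $G[D]$ to $i$ or to $j$. Since $w \in I_i \cap I_j \cap D$ and $w \notin \{i,j,k\}$, in $G[D] - k$ the vertex $w$ is adjacent to both $i$ and $j$; hence $i$, $j$, and all neighbours of $k$ lie in one connected component $C$ of $G[D] - k$. Now $D \setminus \{k\} = C$: otherwise some $v \in D \setminus \{k\}$ lies outside $C$, and a path of $G[D]$ from $v$ to $k$ — which exists, since $G[D]$ is connected — truncated just before its first visit to $k$ is a path of $G[D] - k$ joining $v$ to a neighbour of $k$, i.e.\ joining a vertex outside $C$ to a vertex of $C$, which is impossible. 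Therefore $G[D] - k = G[C]$ is connected, so $k$ is not a cut vertex of $G[D]$, contradicting the preceding paragraph; hence no such $w$ exists, i.e.\ $I_i \cap I_j \cap D = \emptyset$.

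The delicate point is this last step: I must be careful that ``all neighbours of $k$ lie in a single component of $G[D] - k$'' really does imply that $k$ is not a cut vertex, which in turn rests on the connectivity of $G[D]$ to exclude stray components, and I must not overlook the innocuous but essential observation that $w \notin \{i,j,k\}$, so that $w$ survives in $G[D] - k$ and genuinely links $i$ to $j$. The remaining ingredients — the inclusion $I_k \subseteq I_i \cup I_j$ and the elimination of a private neighbour of $k$ — are immediate, and notably the argument uses none of the endpoint orderings established in the earlier lemmata.
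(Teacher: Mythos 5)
Your proof is correct and takes essentially the same route as the paper's: assuming some $w \in I_i \cap I_j \cap D$, minimality forces $k$ to be a cut vertex of $G[D]$ (you spell out the no-private-neighbour argument that the paper dismisses as ``clear''), and the adjacencies of $I_k \subseteq I_i \cup I_j$ to $i$ or $j$ together with the path through $i$, $w$, $j$ contradict this. The only cosmetic difference is that you conclude by showing $G[D]-k$ is connected outright, whereas the paper exhibits a $k$-avoiding path between two neighbours of $k$ that were supposed to lie in different components of $G[D]-k$.
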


\begin{proof}
 
 Let $i,j,k \in D\cap U$, $I_i \cap I_j \ne \emptyset$ and $I_k \subset \{I_i \cup I_j\}$. Suppose, for the sake of
 contradiction, that there is a $ w \in D \cap I_i \cap I_j$.
 It is clear that $k$ is a cut vertex of $G[D]$. Therefore there are $w_x,w_x' \in D \cap I_k$ and $w_x,w_x'$ are in two different
 components of $G[D \setminus k]$. Furthermore $ w_x \in I_i \cap I_k \setminus I_j$ and $ w_x' \in I_j \cap I_k \setminus I_i$.
 However there is a $w_x-w_x'$ path passing through $i,w,j$ without passing through $k$, a contradiction.
\end{proof}

\begin{lemma}\label{obsOne}
 
Let   $i,j \in D \cap U $ such that $I_i \cap I_j \ne \emptyset$. 
Then $ | \{ k \in D : I_k \subset (I_i \cup I_j) \}| \le 1$.

\end{lemma}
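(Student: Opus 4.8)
The plan is to argue by contradiction. Suppose $i,j\in D\cap U$ with $I_i\cap I_j\neq\emptyset$, and that there exist two distinct vertices $k_1,k_2\in (D\cap U)\setminus\{i,j\}$ with $I_{k_1}\subsetneq I_i\cup I_j$ and $I_{k_2}\subsetneq I_i\cup I_j$. The first step is to pin down the relative order of all the endpoints. Since $i,j,k_1,k_2$ all lie in $D$, Lemma~\ref{lemIncD} and Lemma~\ref{lemIncR} forbid any nesting of two of these four intervals and any shared endpoint; together with $I_i\cap I_j\neq\emptyset$, and with the fact that each $I_{k_\ell}$, being unnested with $I_i$ and with $I_j$, is contained in $I_i\cup I_j$ but in neither $I_i$ nor $I_j$ alone, a short case check gives, after relabelling so that $l(I_i)<l(I_j)$ and $l(I_{k_1})<l(I_{k_2})$,
\begin{equation*}
l(I_i)<l(I_{k_1})<l(I_{k_2})<l(I_j)\le r(I_i)<r(I_{k_1})<r(I_{k_2})<r(I_j).
\end{equation*}

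The second step is to use Lemma~\ref{obsForbid} to exhibit a long run of $W$ that is free of $D$. The displayed ordering makes $I_{k_1}$ a proper subset of $I_i\cup I_{k_2}$, so Lemma~\ref{obsForbid} applied to the pair $(i,k_2)$ with third vertex $k_1$ yields $I_i\cap I_{k_2}\cap D=\emptyset$; symmetrically, applied to $(k_1,j)$ with third vertex $k_2$ it yields $I_{k_1}\cap I_j\cap D=\emptyset$. These two sub-intervals of $W$ overlap, and their union is exactly $I_{k_1}\cap I_{k_2}=[\,l(I_{k_2}),r(I_{k_1})\,]$, so no vertex of $D$ lies in $W$ strictly between $l(I_{k_2})$ and $r(I_{k_1})$. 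Hence $W\cap D$ splits as $L\sqcup R$, where $L$ collects the $W$-vertices of index below $l(I_{k_2})$ and $R$ those of index above $r(I_{k_1})$; moreover the neighbour intervals of $i$ and of $k_1$ meet $D$ only inside $L$, while those of $j$ and of $k_2$ meet $D$ only inside $R$ (chopping off the relevant end of each of these four intervals leaves it inside the ``forbidden middle'').

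The last step is to split on how $G[D]$ can stay connected across this forbidden middle. If no $u\in U\cap D$ has $I_u$ meeting both $L$ and $R$, then by Observation~\ref{lemN} every vertex of $U\cap D$ has a $D$-neighbour in $W$, hence on exactly one side, so $U\cap D=U_L\sqcup U_R$ accordingly and $G[D]$ decomposes into the two nonempty, vertex-disjoint graphs $G[U_L\cup L]$ and $G[U_R\cup R]$ (with $i,k_1$ on the left and $j,k_2$ on the right), contradicting that $D$ is connected. Otherwise some $u^{*}\in U\cap D$ has $l(I_{u^{*}})<l(I_{k_2})$ and $r(I_{u^{*}})>r(I_{k_1})$. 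If $l(I_{u^{*}})\le l(I_{k_1})$ then $I_{k_1}\subseteq I_{u^{*}}$, and if $r(I_{u^{*}})\ge r(I_{k_2})$ then $I_{k_2}\subseteq I_{u^{*}}$; either contradicts Lemma~\ref{lemIncD}, once one observes $u^{*}\notin\{k_1,k_2\}$ (the interval of $u^{*}$ meets $D$ on both sides, those of $k_1$ and $k_2$ on only one). In the remaining case $l(I_{k_1})<l(I_{u^{*}})<l(I_{k_2})$ and $r(I_{k_1})<r(I_{u^{*}})<r(I_{k_2})$, so $I_{u^{*}}$ sits ``between'' $I_{k_1}$ and $I_{k_2}$ while still crossing both $I_i$ and $I_j$; then Lemma~\ref{obsForbid} applied to $(i,u^{*})$ with third vertex $k_1$ and to $(u^{*},j)$ with third vertex $k_2$ gives $I_i\cap I_{u^{*}}\cap D=\emptyset$ and $I_{u^{*}}\cap I_j\cap D=\emptyset$, whose union is all of $I_{u^{*}}$; hence $u^{*}\in D$ has no neighbour in $D$, contradicting connectivity of $D$ (recall $|D|\ge 2$). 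This exhausts all cases.

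I expect the main obstacle to be exactly this last case: ruling out a ``bridge'' vertex $u^{*}$ that keeps $G[D]$ connected through the $D$-free middle. The key idea is that such a $u^{*}$, wedged between $I_{k_1}$ and $I_{k_2}$, is itself subject to Lemma~\ref{obsForbid} against the pairs $(i,u^{*})$ and $(u^{*},j)$, which forces its own neighbour interval to contain no vertex of $D$. The only genuinely delicate bookkeeping is getting the endpoint inequalities exactly right so that every invocation of Lemma~\ref{obsForbid} (and of Lemma~\ref{lemIncD}) is licensed.
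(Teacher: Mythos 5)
Your proof is correct, but it follows a genuinely different route from the paper's. The paper argues directly with cut vertices: after fixing the same endpoint ordering $l(I_i)<l(I_k)<l(I_l)<l(I_j)$, $r(I_i)<r(I_k)<r(I_l)<r(I_j)$ via Lemmas~\ref{lemIncD} and~\ref{lemIncR}, it observes that both interior vertices must be cut vertices of $G[D]$ (they can have no private neighbor), picks witnesses $w_x,w_{x'}\in D\cap I_k$ and $w_y,w_{y'}\in D\cap I_l$ separated by $k$ and by $l$ respectively, and then exhibits a $w_y$--$w_{y'}$ path through $k,w_{x'},j$ that avoids $l$, contradicting that $l$ is a cut vertex. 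You instead treat Lemma~\ref{obsForbid} as a black box (it already encapsulates the cut-vertex reasoning) and apply it to the pairs $(i,k_2)$ and $(k_1,j)$ to carve out a $D$-free closed middle segment $[\,l(I_{k_2}),r(I_{k_1})\,]$ of $W$, after which the contradiction comes from connectivity: either $G[D]$ splits into a left and a right part, or a bridging interval $I_{u^*}$ exists, which you kill by Lemma~\ref{lemIncD} or by two further applications of Lemma~\ref{obsForbid} showing $I_{u^*}\cap D=\emptyset$. The paper's argument is shorter and avoids the bridge-vertex case analysis, but its witness-picking ("w.l.o.g.\ $w_x\notin I_l$", "$w_{y'}>r(I_k)$", etc.) is terse and delicate; your version is longer but modular and each step is licensed by an explicit earlier lemma, at the cost of re-deriving through \ref{obsForbid} what the paper gets in one stroke from the cut-vertex property of $l$. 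One cosmetic remark: your phrase ``no vertex of $D$ lies strictly between $l(I_{k_2})$ and $r(I_{k_1})$'' undersells what you actually prove (the closed interval, endpoints included), and it is the closed-interval version that you need for $W\cap D=L\sqcup R$; the derivation gives it, so only the wording should be tightened.
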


\begin{proof}
 Let $i,j \in D$, $I_i \cap I_j \ne \emptyset$ and suppose, for the sake of contradiction, that there are $k,l \in D$ satisfying $I_k \subset \{I_i \cup I_j\}$
 and $I_l \subset \{I_i \cup I_j\}$.
 Because of lemmata \ref{lemIncR} and \ref{lemIncD} and w.l.o.g, then $ l(I_i) <l(I_k) < l(I_l) < l(I_j)$ and $ r(I_i) < r(I_k) < r(I_l) < r(I_j)$. 
It is clear that $k,l$ are cut vertices of $G[D]$.
As $k$ is a cut vertex of $G[D]$, there are $ w_x,w_{x'} \in D \cap I_k$  and $w_x,w_x'$ in two different components of $G[D \setminus k]$.
Furthermore, at least one of them does not belong to $I_l$. W.l.o.g, let $w_x < w_{x'}$ and $w_x \notin I_l$.
Since $l$ is a cut vertex of $G[D]$, there are $w_y,w_{y'} \in D \cap I_l$  and $w_y,w_{y'}$ are in two different components of $G[D \setminus l]$.
It is clear that at least one of $w_y,w_{y'}$ is not a neighbor of $k$.
W.l.o.g, let $w_{y'} > w_{y}$ and $w_{y'} >  r(I_k) $.
Furthermore, $w_{y} < l(I_j)$ and $w_{y}> w_x$ otherwise $w_x,w_{x'}$ will be adjacent to $l$ and $w_y,w_{y'}$ will be adjacent to $j$.
It is clear that $w_{x'} > r(I_i)$.
Hence, there is a $w_{y}-w_{y'}$ path passing through $k,w_{x'},j$ without passing through $l$, a contradiction.
\end{proof}

Now we are ready to present our algorithm.

\section{The enumeration algorithm}

The basic idea of the enumeration algorithm is to choose the vertices of a \mcds{} $D$ by using reduction and branching rules.
Furthermore the algorithm is partitioned into stages. During the preprocessing (stage 1) a collection of initializing recursive calls is done. In stage 2 we choose the vertices of $D \cap U$. 
Therefore when we select a vertex $u\in U$, we add it immediately
to the \mcds{} $D$. However when we discard $u$ from $D$, we move it to $T$ in order to dominate it in the next stages of the algorithm by a (still to be selected) vertex of $W$.
Furthermore when we fix the vertices of $D \cap U$, we mark some vertices of $W$ as forbidden vertices $F$, which means that those vertices of $W$ cannot be selected (they are excluded from D). 
In stage 3 we mainly remove all those partial solutions of stage 2
that cannot be extended to a \mcds{} of $G$. Finally in stage 4
the remaining partial solutions are completed into \mcds{}, if possible. 

\medskip
\underline{{\bf \textit{Stage~1. Preprocessing.}}}

\medskip
\noindent
We consider the following procedure $\textsc{EnumLevel1}(U,W)$, where we initialize the call of $\textsc{EnumLevel2}(u,U,D,T,F)$.

\medskip
\noindent
{\bf Step~1.} for each $ u \in N(w_1)$, call  $\textsc{EnumLevel2}(u,U\setminus N(w_1),\{u\},N(w_1)\setminus \{u\},\emptyset)$.

\medskip
\noindent
By observation \ref{lemN}, $|N(w_1) \cap D| \ge 1$.
Since all the neighbors of $w_1$ have the same left endpoint, $ |N(w_1) \cap D| \le 1$ by lemma \ref{lemIncR}.
Therefore $ |N(w_1) \cap D| = 1$. Hence we add exactly one vertex $u \in N(w_1)$ to $D$ and discard the other
vertices $N(w_1) \setminus \{u\}$ from the solution, i.e.
we are moving them to $T$ in order to dominate
them by vertices of $W$ in the next stages of the algorithm.
Finally, we initialize $F=\emptyset$ and call $\textsc{EnumLevel2}(u,U,D,T,F)$.

\medskip
\underline{{\bf \textit{Stage~2.}}} 

\medskip
\noindent
We consider the following recursive procedure $\textsc{EnumLevel2}(u,U,D,T,F)$, where $u$ is the vertex already selected
in $D\cap U$ with the largest right endpoint and $U,D,T,F$ were mentioned above. Let $r(I_u)$ be $r$.

\medskip
\noindent
{\bf Step~1.} If there is a vertex $i\in U$ such that $r(I_i) \le r$, then call $\textsc{EnumLevel2}(u,U\setminus\{i\},D,T\cup \{i\},F)$.

\medskip
\noindent
{\bf Step~2.} If $ r = w_{|W|}$, then call $\textsc{EnumLevel3}(D,T,F)$.

\medskip
\noindent
{\bf Step~3.} If $ N(r ) = \emptyset$, then stop.

\medskip
\noindent
{\bf Step~4.} If $ deg(r) = 1 $, then let $ \{j\} = N(r)$ and call
$\textsc{EnumLevel2}(j,U\setminus\{j\},D\cup \{j\},T,F)$.

\medskip
\noindent
{\bf Step~5.} If $ deg(r) = 2$, then let $ \{j,k\} = N(r)$ and branch as follows:

 \hspace*{\mg}If $ I_j \subseteq I_k $ or $ I_k \subseteq I_j $, then branch:
\begin{itemize} [leftmargin=\mgd]
\item[(i)] call $\textsc{EnumLevel2}(j, U\setminus\{j,k\}, D\cup\{j\},T \cup\{k\},F)$,
\item[(ii)] call  $\textsc{EnumLevel2}(k,U\setminus\{j,k\},D\cup\{k\},T\cup\{j\},F)$.
\end{itemize}

\hspace*{\mg}Else, let $r(j)>r(k)$ and branch:
\begin{itemize} [leftmargin=\mgd]
\item[(i)] call $\textsc{EnumLevel2}(j, U\setminus\{j,k\}, D\cup\{j\},T \cup\{k\},F)$,
\item[(ii)] call  $\textsc{EnumLevel2}(k,U\setminus\{j,k\},D\cup\{k\},T \cup\{j\},F)$.
\item[(iii)] call  $\textsc{EnumLevel2}(j,U\setminus\{j,k\},D\cup\{j,k\},T,F \cup ( I_j \cap I_u) )$.
\end{itemize}

\medskip
\noindent
{\bf Step~6.} If $ deg(r) \ge 3$, then let $j$ be the neighbor of $r$ with the largest right endpoint and branch as follows:
\begin{itemize}[leftmargin=\mgd]
\item[(i)] call $\textsc{EnumLevel2}(j, U\setminus N(r), D\cup\{j\},T \cup N(r) \setminus \{j\},F)$,
\item[(ii)] for each $x \in N(r)$ with $I_x \not \subseteq I_j~$, call\\ 
$\textsc{EnumLevel2}(j,U\setminus N(r) ,D\cup\{x,j\},T \cup N(r) \setminus \{x,j\},F \cup \{I_j \cap I_u\})$.
\item[(iii)] call  $\textsc{EnumLevel2}(u,U\setminus\{j\},D,T \cup\{j\},F)$.
\end{itemize}

\noindent
Let $j$ be the neighbor of $r$ with the largest right endpoint. Either $j \in D$ or $j \notin D$.\\
If $j \notin D$, then in case (iii) we add $j$ to $T$ in order to be dominated in the next stages by vertices of $W$.
Therefore we call $\textsc{EnumLevel2}(u,U\setminus\{j\},D,T \cup\{j\},F)$.\\
Suppose now that $j \in D$. Because $j$ is the neighbor of $r$ with the largest right endpoint then
$I_u \cap I_j \neq \emptyset $ and for any $x \in D \cap N(r) \setminus \{u,j\} $ we have $I_x \subset (I_u \cup I_j) $.
Therefore by lemma \ref{obsOne}, $| N(r) \cap D \setminus \{u,j\} | \le 1$.
If $ N(r) \cap D \setminus \{u,j\} = \emptyset$, then in case (i) we call $\textsc{EnumLevel2}(j, U\setminus N(r), D\cup\{j\},T \cup N(r) \setminus \{j\},F)$.
Suppose now that $ N(r) \cap D \setminus \{u,j\} = \{x\}$.
In this case, we forbid by lemma \ref{obsForbid} $ ( I_j \cap I_u ) $.
Notice that by lemma \ref{lemIncD} $ I_x \not \subseteq I_j $.
Hence we branch for each $x \in N(r)$ satisfying $I_x \not \subseteq I_j~$ and we call\\
$\textsc{EnumLevel2}(j,U\setminus N(r) ,D\cup\{x,j\},T \cup N(r) \setminus \{x,j\},F \cup \{I_j \cap I_u\})$.
If there is an $x \in U$ such that $I_x \subseteq I_j$, then $x$ will be treated by step 1 in the recursive call.

\medskip
\underline{{\bf \textit{Stage~3.}}}  

\medskip
\noindent
In this stage, the procedure $\textsc{EnumLevel3}(D,T,F)$ deletes the bad partial solutions, i.e. those that can definitely not be extended
into a \mcds{} of $G$, generated in the previous stage and it preprocesses the remaining partial solutions for the next stage. 
Let $J(T,D)$ be a set of intervals, where for every interval we have to select at least
one corresponding vertex either to dominate a vertex in $T$ or to connect vertices in $D$.
Hence we initialize $J(T,D)$ by the interval representations of neighbors of vertices in $T$ and consider the induced
graph $ G[W\cup (U\cap D)]$.
Let $I'$ be the interval representations of neighbors of vertices in $D\cap U$.

\medskip
\noindent
{\bf Step~1.} If there exist $u,v \in D\cap U$ such that $I'_u \subseteq I'_v$, then stop.

\medskip
\noindent
{\bf Step~2.} While ( $I' \ne \emptyset$ )\\
\hspace*{\mg} Begin\\
\hspace*{\mg}
Let $I'_i$ be the interval with the smallest right endpoint in $I'$ and $r(I'_i)$ be $r$.\\
\hspace*{\mg}
If $I'=\{I'_i\}$, then call $\textsc{EnumLevel4}(D,J,F)$.\\
\hspace*{\mg}
Else if $ deg(r) > 3 $, then stop.\\
\hspace*{\mg}
Else if $ deg(r)=2 $, then let $ N(r)=\{i,j\}$ :\\
\hspace*{\mgd}$J(T,D) \leftarrow J(T,D) \cup \{I'_i \cap I'_j\} $ ;\\
\hspace*{\mgd}$I' \leftarrow I' \setminus I'_i$.\\
\hspace*{\mg}
Else if $ deg(r)=3 $, then let $ N(r)=\{i,j,k\}$, $ I'_j$ be the interval with the largest right
\hspace*{\mg} endpoint and
$I'_k$ be the other interval such as $I'_k \subset (I'_i \cup I'_j)$ :\\
\hspace*{\mgd}$J(T,D)  \leftarrow J(T,D) \cup ( I'_i \cap I'_k  \setminus I'_j) \cup ( I'_j \cap I'_k  \setminus I'_i) $ ;\\
\hspace*{\mgd}$I' \leftarrow I' \setminus (I'_i \cup I'_k) $.\\
\hspace*{\mg}
End.\\

\noindent
 First, every vertex in $T$ should be dominated by at least one of its neighbors. Therefore we initialized $J(T,D)$ by the interval
 representations of neighbors of vertices in $T$.\\
 Now if $ deg(r)> 3 $ in the induced graph $ G[W\cup (U\cap D)]$, then $D$ cannot be a \mcds{} by lemma \ref{obsOne}. Therefore we stop.\\
 So let $ deg(r)=2 $:\\
 Suppose, for the sake of contradiction, that $I'_i \cap I'_j \cap D = \emptyset$.
 Then by observation \ref{lemN} there is a $w_i \in I'_i \cap D$ and there is a $ w_j \in I'_j \cap D$. W.l.o.g, let $w_i,w_j$ be consecutive in $G[D]$.
 As $deg(r)=2$, there cannot be any $u \in U\cap D$ such that $u \in N(w_i) \cap N(w_j)$, a contradiction by lemma \ref{lemCovD}.
 Therefore $I'_i \cap I'_j \cap D \ne \emptyset$.
 Hence $J(T,D) \leftarrow J(T,D) \cup \{I'_i \cap I'_j\}$.\\
 Now if $ deg(r)=3 $, then the vertices of $I'_i \cap I'_j$ are forbidden and $k$ is a cut vertex of $G[D]$ which connects two vertices in $W \cap D$.
 If both of these two vertices had belonged to $( I'_i \cap I'_k  \setminus I'_j)$ or $( I'_j \cap I'_k  \setminus I'_i) $, then $k$ would not have been a cut vertex.
 Therefore one belongs to $( I'_i \cap I'_k  \setminus I'_j)$ and the another one belongs to $( I'_j \cap I'_k  \setminus I'_i) $.
 Hence, we added $( I'_i \cap I'_k  \setminus I'_j)$ and $( I'_j \cap I'_k  \setminus I'_i) $ to $J(T,D)$.

\medskip
\underline{{\bf \textit{Stage~4.}}}  

\medskip
\noindent
We consider the following recursive procedure called $\textsc{EnumLevel4}(W,J,D)$.
We try, in this level, to select at least one vertex of each interval in $J$ in order to dominate the vertices of $T$ and to connect $D$
by some vertices of $W$.\\
\noindent
Let $J_i \in J$ be the interval with the smallest right endpoint. If we have more than one candidate interval, then the shortest interval amongst them
will be chosen.

\medskip
\noindent
{\bf Step~1.} If $J=\emptyset$, then check whether $D$ is a \mcds{} of $G$ and output it if it holds; then stop.

\medskip
\noindent
{\bf Step~2.} If all the vertices of $J_i$ are forbidden, then stop.

\medskip
\noindent
{\bf Step~3.} For each non forbidden $w \in J_i$,
call $\textsc{EnumLevel4}(W \setminus J_i, J\setminus \{J_k:w\in J_k \},D\cup\{w\})$.\\

\noindent
By the construction of $J$, we must select at least one vertex of each interval in $J$. Therefore $ |D \cap J_i| \ge 1$.
Hence in Step~2, we stop when all the vertices of $J_i$ are forbidden.
Let us prove now that $ |D \cap J_i| \le 1$.
Suppose, for the sake of contradiction, that $ |D \cap J_i| \ge 2$. Therefore there are $ w_i,w_j \in J_i \cap D$.
W.l.o.g, we suppose that $ w_i < w_j $.
In this case if $ w_i \in J_k$ such that $J_k \in J$, then $w_j \in J_k$.
Therefore, $D$ is not minimal because we can delete $w_i$ and $D$ is still a connected dominating set, a contradiction.
Hence $ |D \cap J_i| \le 1$ and because $ |D \cap J_i| \ge 1$ then $ |D \cap J_i| = 1$.

\section{Running time and upper bound}

We establish an upper bound on the number of minimal connected dominating sets in convex bipartite graphs, via the branching
algorithm described in the previous section and its running-time analysis.

For the analysis of the running time and the number of minimal connected dominating sets that are produced by such an algorithm, we use a technique based on
solving recurrences for branching steps and branching rules respectively. 
We refer to the book by Fomin and Kratsch \cite{FominK10} for a detailed introduction. 
To analyze such a branching algorithm solving an enumeration problem, one assigns
to each \emph{instance} $I$ of the recursive algorithm a \emph{measure} $\mu(I)$ that one may 
consider as the size of the instance $I$. 
If the algorithm branches on an instance $I$ into $t$ new instances, such that the  
measure decreases by $c_1, c_2, \ldots, c_t$ for each new instance, respectively, we say that
$(c_1, c_2, \ldots, c_t)$ is the \emph{branching vector} of this step.
We find the unique positive real root $\alpha$, called a \emph{branching number}, of
the \emph{characteristic polynomial} $p(x)=x^c-x^{c-c_1}- \ldots - x^{c-c_t}$ for $c=\max\{c_1,\ldots,c_t\}$. 
Then standard analysis (see~\cite{FominK10}), shows that if $\mu(I)\leq n$ for all instances $I$, the number of leaves of the search tree
produced by an execution of the algorithm is $O^*(\alpha^n)$, where $\alpha$ is the maximum value of the branching numbers over all branching vectors that occur in the algorithm. 
This approach allows us to  achieve running times of the form $O^*(\alpha^n)$ for some real $\alpha \ge 1$. As the number of \mcdss{} produced by
an algorithm is upper bounded by the number of leaves of the search tree, we also obtain the upper bound for the number of \mcdss{} of the same
form $O^*(\alpha^n)$. If $\alpha$ has been obtained by rounding up then one may replace $O^*(\alpha^n)$ by $O(\alpha^n)$; see \cite{FominK10}.

To analyze the running time of the algorithm, we compute the branching vectors for all branching steps of
the procedures $\textsc{EnumLevel4}(W,J,D)$ and $\textsc{EnumLevel2}(u,U,D,T,F)$.
Notice that $\textsc{EnumLevel1}$ and $\textsc{EnumLevel3}$ runs in polynomial time as no branching is needed.
We set the measure of an instance to $|U|+|W\setminus F|$. Hence by moving a vertex $u \in U$ to $T$ or by forbidding a vertex $w\in W$, the measure of the instance
decreases by $1$.

Let us start with $\textsc{EnumLevel2}(u,U,D,T,F)$.
Notice that in Steps $1 - 4$ we reduce an input without branching (reduction rules).
Hence, to analyze the time, we only have to analyze Steps 5 and 6 (branching rules).

\smallskip
\noindent
{\bf Step~5.} The first branching vector in Step~5 is $(2,2)$. In the second branching rule, in the worst case we forbid only one vertex in $W$.
Thus the branching vector is $(2,2,3)$. 
Hence the maximum value of the branching numbers is achieved for $(2,2,3)$ and thus $\alpha_5<1.6181$.

\smallskip
\noindent
{\bf Step~6.} It is straightforward to see, that the maximum value of the branching number is
achieved if $|I_u \cap I_j|=1$ and if there is no $ I_x \subseteq I_j$ such that $x,j \in N(r) $. Thus we branch for all the neighbors of $r$ and we forbid
in each case the only vertex $I_u \cap I_j$.
The corresponding branching vector is $(t, \underbrace{t+1,\ldots,t+1}_{t-1},1 )$ where $t= deg(r) \ge 3$, and the maximum value of the branching
number $\alpha_6< 1.7254$ is achieved for $t=3$.

For $\textsc{EnumLevel4}(W,J,D)$, we need to analyze Step 3 only because Steps 1 and 2 are reduction rules.
The corresponding branching vector is $( \underbrace{t,\ldots,t}_{t} )$ where $t=|J_i \setminus F|$, and the maximum value of the branching
number $\alpha_3< 1.4423$ is achieved for $t=3$.

\smallskip
The largest branching number is (majorized
by) $1.7254$. Thus we may conclude with the following theorem. 

\begin{theorem}\label{thm}
A convex bipartite graph has at most $\mathcal{O}(1.7254^{n})$ \mcdss{},
and these can be enumerated in time $\mathcal{O}(1.7254^n)$.
\end{theorem}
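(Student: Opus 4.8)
The plan is to certify that the branching algorithm of Section~\ref{sec:intro} (procedures $\textsc{EnumLevel1}$--$\textsc{EnumLevel4}$) is correct and to bound its running time; correctness already gives the enumeration claim, and the leaf count of the search tree then gives the upper bound on the number of \mcdss{}. First I would argue \emph{correctness}: every \mcds{} $D$ of $G$ is output exactly once, and nothing else is output. Soundness is easy --- Steps~\textbf{1} of $\textsc{EnumLevel4}$ and the final check there verify that the produced set is genuinely a \mcds{}. Completeness is the substantive direction: I would fix an arbitrary \mcds{} $D$ and trace the unique root-to-leaf path the algorithm follows when guided by $D$. Observation~\ref{lemN} and Lemma~\ref{lemIncR} justify $\textsc{EnumLevel1}$ (exactly one neighbour of $w_1$ lies in $D$); the reduction Steps~\textbf{1}--\textbf{4} of $\textsc{EnumLevel2}$ are forced; and for the branching Steps~\textbf{5}--\textbf{6} I would check, case by case on $\deg(r)$, that one of the listed children is consistent with $D$ --- this is where Lemmata~\ref{lemIncD}, \ref{obsForbid}, \ref{obsOne} and \ref{lemCov}/\ref{lemCovD} do the work, exactly as flagged in the running text after Step~\textbf{6}. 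In particular, when the top neighbour $j$ of $r$ is in $D$, Lemma~\ref{obsOne} bounds $|N(r)\cap D\setminus\{u,j\}|\le 1$, so case (ii) with the right choice of $x$ (or case (i) if that set is empty) matches $D$, and Lemma~\ref{obsForbid} legitimises moving $I_j\cap I_u$ into $F$. Then I would verify that $\textsc{EnumLevel3}$ never discards a partial solution extendable to a \mcds{}: Step~\textbf{1} uses Lemma~\ref{lemIncD}, the $\deg(r)>3$ stop uses Lemma~\ref{obsOne}, and the $\deg(r)=2,3$ cases build $J(T,D)$ correctly using Lemma~\ref{lemCovD} (each such interval provably meets $D$). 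Finally $\textsc{EnumLevel4}$ picks, from each interval of $J$, the (unique, by the argument after its Step~\textbf{3}) vertex of $W\cap D$ inside it, so $D$ is reached; uniqueness of the path follows because every branch partitions the possibilities for $D$ disjointly.

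Next I would do the \emph{running-time analysis} with the measure $\mu(I)=|U|+|W\setminus F|$, noting $\mu(I)\le n$ on every instance. Each reduction rule (Steps~\textbf{1}--\textbf{4} of $\textsc{EnumLevel2}$, Steps~\textbf{1}--\textbf{2} of $\textsc{EnumLevel4}$, and all of $\textsc{EnumLevel1}$, $\textsc{EnumLevel3}$) either strictly decreases $\mu$ or does no branching and costs only polynomial time, so it cannot affect the exponential base. For the genuine branchings I would extract the branching vectors exactly as in the text: Step~\textbf{5} gives $(2,2)$ in the ``nested'' case and $(2,2,3)$ in the other case (in each child we either move both of $j,k$'s complement into $T$, i.e. drop two from $|U|$, or add both to $D$ and forbid at least one $W$-vertex); this yields root $\alpha_5<1.6181$. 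Step~\textbf{6} with $t=\deg(r)\ge 3$ gives $(t,\underbrace{t+1,\dots,t+1}_{t-1},1)$ --- child (i) removes $N(r)$ from $U$ costing $t$; each child (ii) additionally forbids the singleton $I_u\cap I_j$, costing $t+1$; child (iii) removes $j$ from $U$, costing $1$ --- whose branching number is maximised at $t=3$, giving $\alpha_6<1.7254$. $\textsc{EnumLevel4}$ Step~\textbf{3} branches into $t=|J_i\setminus F|\le 3$ children (the bound $3$ coming from $\deg(r)\le 3$ enforced in $\textsc{EnumLevel3}$), each removing $J_i$ from $W$, i.e. vector $(\underbrace{t,\dots,t}_{t})$ with branching number maximised at $t=3$, giving $\alpha_3=3^{1/3}<1.4423$. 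Taking the maximum of $\alpha_3,\alpha_5,\alpha_6$ gives $\alpha<1.7254$, hence $O^*(1.7254^n)$ leaves; since $1.7254$ is a rounded-up bound, the $O^*$ collapses to $O(1.7254^n)$ by the remark from~\cite{FominK10}, and each leaf does polynomial work, proving both statements of the theorem.

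The step I expect to be the main obstacle is the \emph{completeness / uniqueness} bookkeeping in $\textsc{EnumLevel2}$ Step~\textbf{6} together with the $\deg(r)=3$ branch of $\textsc{EnumLevel3}$: one must simultaneously argue (a) that the listed children exhaust all ways $D$ can interact with $N(r)$, (b) that the intervals forbidden via $F$ (namely $I_j\cap I_u$, and in $\textsc{EnumLevel3}$ the set $I'_i\cap I'_j$) genuinely contain no vertex of the target $D$ --- this rests on Lemmata~\ref{obsForbid} and the cut-vertex analysis --- and (c) that the interval $J(T,D)$ constructed for $\deg(r)=3$ really is hit by $D$ in both of the two pieces $I'_i\cap I'_k\setminus I'_j$ and $I'_j\cap I'_k\setminus I'_i$, which is the cut-vertex argument sketched after $\textsc{EnumLevel3}$. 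Getting these invariants stated precisely --- what $D\cap T=\emptyset$, $D\cap F=\emptyset$, $D\cap(U\setminus(\text{current }U))\subseteq D$, and ``$J(T,D)$ is a set of intervals each meeting $W\cap D$ in exactly one vertex'' mean across the stage boundaries --- and checking they are preserved by every rule is the bulk of the work; the recurrence solving afterwards is routine.
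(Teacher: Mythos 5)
Your proposal is correct and takes essentially the same route as the paper: the same measure $|U|+|W\setminus F|$, the same branching vectors $(2,2)$, $(2,2,3)$, $(t,\underbrace{t+1,\dots,t+1}_{t-1},1)$ and $(\underbrace{t,\dots,t}_{t})$, the same worst-case branching number $1.7254$ attained in Step~6 with $t=3$, and correctness resting on Lemmata~\ref{lemIncD}--\ref{obsOne} exactly as in the paper's running commentary on the stages. (Your parenthetical claim that $|J_i\setminus F|\le 3$ in $\textsc{EnumLevel4}$ is neither guaranteed --- intervals coming from vertices of $T$ can be long --- nor needed, since $t^{1/t}$ is maximised at $t=3$ over the integers, which is also how the paper argues.)
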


\section{Lower bound}

To obtain a lower bound for the maximum number of \mcdss{} in a convex bipartite graph, we use a slight modification of the lower bound obtained in \cite{ch+mcds}.

\begin{proposition}

There are convex bipartite graphs with at least $3^{(n-2)/3} $ minimal connected dominating sets.

\end{proposition}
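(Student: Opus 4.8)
The plan is to adapt the interval/chordal‑graph lower‑bound construction of \cite{ch+mcds} (a ``thick path of triangles'') to the bipartite setting, producing for every integer $k\ge 2$ a convex bipartite graph $G_k$ on $n=3k+2$ vertices that has at least $3^{k}=3^{(n-2)/3}$ \mcdss{}.

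First I would define $G_k$. Take $k$ pairwise disjoint triples (``tribes'') $T_i=\{a_i,b_i,c_i\}$, $1\le i\le k$, and two extra vertices $x,y$. Place $T_i$ on the $W$-side if $i$ is odd and on the $U$-side if $i$ is even; put $x$ on the side opposite $T_1$ and $y$ on the side opposite $T_k$. The edges are: all nine edges between $T_i$ and $T_{i+1}$ for $1\le i\le k-1$, all edges from $x$ to $T_1$, and all edges from $y$ to $T_k$; there are no edges inside a tribe. It is immediate that $G_k$ is bipartite. For convexity, order $W$ so that the odd‑indexed tribes form consecutive blocks in increasing order, $T_1,T_3,T_5,\dots$, appending $y$ at the very end in the single case ($k$ even) where $y\in W$. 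Then every $U$-vertex has an interval neighbourhood: $N(x)=T_1$ and $N(y)=T_k$ are the extreme blocks, and each vertex of an even tribe $T_{2j}$ has neighbourhood $T_{2j-1}\cup T_{2j+1}$ (two consecutive blocks, the last one being $T_{k-1}\cup\{y\}$ when $k$ is even). Hence $G_k$ is a convex bipartite graph.

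The core of the argument is that for any transversal, i.e.\ any choice of $s_i\in T_i$ for $i=1,\dots,k$, the set $D=\{s_1,\dots,s_k\}$ is a \mcds{} of $G_k$. Since $s_i$ and $s_{i+1}$ are always adjacent while $s_i,s_j$ with $|i-j|\ge 2$ never are, $G[D]$ is exactly the path $s_1 - s_2 - \dots - s_k$, so $D$ is connected. For domination: every vertex of $T_i\setminus\{s_i\}$ is adjacent to $s_{i-1}$ or $s_{i+1}$ (this is where $k\ge 2$ is needed, to cover $i\in\{1,k\}$), while $x$ is adjacent to $s_1\in D$ and $y$ to $s_k\in D$. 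For minimality: each internal vertex $s_i$ ($2\le i\le k-1$) is a cut vertex of the path $G[D]$, the endpoint $s_1$ has $x$ as a private neighbour (since $x$ is adjacent to no $s_j$ with $j\ne 1$), and symmetrically $s_k$ has $y$ as a private neighbour. As the tribes are disjoint, distinct transversals yield distinct sets $D$, so $G_k$ has at least $3^{k}$ \mcdss{}; with $n=3k+2$ this is $3^{(n-2)/3}$, as claimed.

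The only delicate point --- the ``hard part'' --- is that two demands pull against each other: consecutive tribes must be completely joined so that \emph{every} transversal is both connected and dominating, yet the graph must remain convex bipartite. This is exactly what forces the alternation of tribes between $U$ and $W$ and the specific $W$-ordering above; once that is set up, the remaining verifications are routine, the one thing to keep in mind being the restriction $k\ge 2$ (for $k=1$ the graph degenerates to $K_{2,3}$, whose \mcdss{} behave differently).
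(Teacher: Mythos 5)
Your proposal is correct and uses essentially the same construction as the paper: a path of $k$ triples joined completely between consecutive triples, with two extra end vertices, so that every transversal of the triples is a \mcds{}, giving $3^{(n-2)/3}$ sets for $n=3k+2$. The only differences are cosmetic: you spell out the bipartition, the $W$-ordering witnessing convexity, and the minimality check explicitly (and allow both parities of $k$), whereas the paper restricts to odd $k$ and exhibits convexity via an interval model.
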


\begin{proof}
To obtain the bound for convex bipartite graphs, consider the graph $G$ constructed as follows for a positive odd integer $k$.
\begin{itemize}
 \item For $i \in \{1,\dots,k\}$, construct a triple of independent vertices $ T_i = \{x_i, y_i, z_i\}$.
 \item For $i \in \{2,\dots,k\}$, join each vertex of $T_{i-1}$ with every vertex of $T_i$ by an edge.
 \item Construct two vertices $u$ and $v$ and edges $ux_1$, $uy_1$, $uz_1$ and $vx_k$, $vy_k$, $vz_k$.

\end{itemize}

Clearly, $G$ has $n = 3k + 2$ vertices. Notice that $D \subseteq V (G)$ is a minimal connected dominating set of $G$ if and only if $u, v \notin D$ and
$|D \cap T_i| = 1$ for $i \in {1,\dots,k}$. Therefore, $G$ has $3^{k} = 3^{(n-2)/3}$ minimal connected dominating sets.
It remains to observe that $G$ is convex bipartite from its following model. 
\end{proof}

\begin{tikzpicture}
    \node (x_1) at (0,0) {$x_1$};
    \draw (0,-0.5) node[red]{$\bullet$};
    
    \node (y_1) at (0.75,0) {$y_1$};
    \draw (0.75,-0.5) node[red]{$\bullet$};

    \node (z_1) at (1.5,0) {$z_1$};
     \draw (1.5,-0.5) node[red]{$\bullet$};

    \node (x_3) at (2.25,0) {$x_3$};
    \draw (2.25,-0.5) node[red]{$\bullet$};

    \node (y_3) at (3,0) {$y_3$};
    \draw (3,-0.5) node[red]{$\bullet$};

    \node (z_3) at (3.75,0) {$z_3$} ;
    \draw (3.75,-0.5) node[red]{$\bullet$};

    \node (x_5) at (4.5,0) {$x_5$};
        \draw (4.5,-0.5) node[red]{$\bullet$};

    \node (y_5) at (5.25,0) {$y_5$};
        \draw (5.25,-0.5) node[red]{$\bullet$};

    \node (z_5) at (6,0) {$z_5$} ;
        \draw (6,-0.5) node[red]{$\bullet$};

            \node (x_k2) at (8.75,0) {$x_{k-2}$};
        \draw (8.75,-0.5) node[red]{$\bullet$};

    \node (y_k2) at (9.5,0) {$y_{k-2}$};
        \draw (9.5,-0.5) node[red]{$\bullet$};

    \node (z_k2) at (10.25,0) {$z_{k-2}$} ;
        \draw (10.25,-0.5) node[red]{$\bullet$};

    \node (x_k) at (11,0) {$x_k$};
        \draw (11,-0.5) node[red]{$\bullet$};

    \node (y_k) at (11.75,0) {$y_k$};
        \draw (11.75,-0.5) node[red]{$\bullet$};

    \node (z_k) at (12.5,0) {$z_k$} ;
        \draw (12.5,-0.5) node[red]{$\bullet$};

    \draw [draw=blue, very thick] (0,-1) -- (1.5,-1) node [ near start, label=below :{$I_u$}] {};
    
   \draw [draw=blue, very thick] (0,-2) -- (3.75,-2) node [near start, label=below :{$I_{x_2}$}] {};
   
   \draw [draw=blue, very thick] (0,-3) -- (3.75,-3) node [near start, label=below :{$I_{y_2}$}] {};
   
   \draw [draw=blue, very thick] (0,-4) -- (3.75,-4) node [near start, label=below :{$I_{z_2}$}] {};

   \draw [draw=blue, very thick] (2.25,-1.5) -- (6,-1.5) node [near end, label=below :{$I_{x_4}$}] {};
   
   \draw [draw=blue, very thick] (2.25,-2.5) -- (6,-2.5) node [near end, label=below :{$I_{y_4}$}] {};
   
   \draw [draw=blue, very thick] (2.25,-3.5) -- (6,-3.5) node [near end, label=below :{$I_{z_4}$}] {};

       \draw [draw=blue, very thick] (11,-1) -- (12.5,-1) node [near end, label=below :{$I_v$}] {};
       
          \draw [draw=blue, very thick] (8.75,-2) -- (12.5,-2) node [near start, label=below :{$I_{x_{k-1}}$}] {};
   
   \draw [draw=blue, very thick] (8.75,-3) -- (12.5,-3) node [near start, label=below :{$I_{y_{k-1}}$}] {};
   
   \draw [draw=blue, very thick] (8.75,-4) -- (12.5,-4) node [near start, label=below :{$I_{z_{k-1}}$}] {};

   \draw (7,-1)  -- (7.75,-1)[dashed];

  \end{tikzpicture}

\section{Conclusions}\label{sec:concl}

While the enumeration of the minimal dominating sets of a graph attracted
a lot of attention, in particular due to its relation to the (output-sensitive) enumeration of the minimal transversals of a hypergraph,
not much is known for the highly related problem of enumeration of the minimal connected dominating sets for a variety of graph classes and for both input- and output-sensitive enumeration. 
First (input-sensitive) enumeration algorithms for \mcds{} of a graph have been obtained in \cite{ch+mcds} which studies among others strongly chordal and AT-free graphs. 
It seems that it is hard to use bipartiteness as a structural aide
when constructing enumeration algorithms. 
In this paper we provided the first enumeration algorithm for some non trivial subclass of  bipartite graphs, namely the convex graphs. 
It would be very interesting to extend our research to enumeration algorithms and upper bounds
on the number of \mcds{} in chordal bipartite graphs and bipartite graphs,
on the number of minimal dominating sets and maximal irredundant sets in convex bipartite graphs on the other hand.

\end{document}